\newtheorem{theorem}{Theorem}[section]
\newtheorem*{thm}{Theorem}
\newtheorem{lemma}[theorem]{Lemma}
\newtheorem{corollary}[theorem]{Corollary}
\theoremstyle{definition}
\newtheorem{definition}[theorem]{Definition}
\theoremstyle{remark}
\newtheorem{remark}{Remark}
\newcommand{\ds}[1]{\displaystyle{#1}}
\newcommand{\abs}[1]{\left\lvert #1 \right\rvert}
\newcommand{\norm}[1]{\left\lVert #1 \right\rVert}
\newcommand{\I}{{\rm i}}
\newcommand{\inner}[2]{\left\langle#1,#2\right\rangle}
\newcommand{\cH}{\mathcal{H}}
\newcommand{\cB}{\mathcal{B}}
\newcommand{\R}{{\mathbb R}}
\newcommand{\C}{{\mathbb C}}
\newcommand{\N}{{\mathbb N}}
\newcommand{\Z}{{\mathbb Z}}
\newcommand{\cc}[1]{\overline{#1}}
\newcommand{\ournewclass}{\mathscr{S}(\mathcal{H})}
\newcommand{\mb}[1]{\boldsymbol{#1}}
\renewcommand\tilde{\widetilde}
\renewcommand\hat{\widehat}
\DeclareMathOperator{\im}{im}
\DeclareMathOperator{\dom}{dom}
\DeclareMathOperator{\Ker}{ker}
\DeclareMathOperator{\ran}{ran}
\DeclareMathOperator{\spec}{spec}
\DeclareMathOperator{\Span}{span}
\DeclareMathOperator{\assoc}{assoc}
\begin{document}
\begin{titlepage}
\title%
{\vspace{-1cm}A class of $n$-entire Schr\"odinger operators 
\footnotetext{%
Mathematics Subject Classification(2010):
Primary
47A25, 
47B25; 
Secondary
46E22, 
47N99. 
}
\footnotetext{%
Keywords: Schr\"odinger operators; de Branges spaces; spectral
analysis
}
\\[2mm]}
\author{
\textbf{Luis O. Silva}\thanks{Partially supported by CONACYT (M{\'e}xico)
	through grant CB-2008-01-99100}
\\
\small Departamento de F\'{i}sica Matem\'{a}tica\\[-1.6mm]
\small Instituto de Investigaciones en Matem\'{a}ticas Aplicadas y
	en Sistemas\\[-1.6mm]
\small Universidad Nacional Aut\'{o}noma de M\'{e}xico\\[-1.6mm]
\small C.P. 04510, M\'{e}xico D.F.\\[-1.6mm]
\small \texttt{silva@iimas.unam.mx}
\\[4mm]
\textbf{Julio H. Toloza}\thanks{Partially supported by CONICET (Argentina)
	through grant PIP 112-201101-00245}
\\
\small CONICET\\[-1.6mm]
\small Centro de Investigaci\'{o}n en Inform\'{a}tica para la
	Ingenier\'{i}a\\[-1.6mm]
\small Universidad Tecnol\'{o}gica Nacional --
	 Facultad Regional C\'{o}rdoba\\[-1.6mm]
\small Maestro L\'{o}pez esq.\ Cruz Roja Argentina\\[-1.6mm]
\small X5016ZAA C\'{o}rdoba, Argentina\\[-1.6mm]
\small \texttt{jtoloza@scdt.frc.utn.edu.ar}}
\date{}
\maketitle
\begin{center}
\begin{minipage}{5in}
  \centerline{{\bf Abstract}}
  \bigskip
  We study singular Schr\"odinger operators on a finite interval as
  selfadjoint extensions of a symmetric operator. We give sufficient
  conditions for the symmetric operator to be in the $n$-entire class,
  which was defined in our previous work \cite{IV}, for some $n$. As a
  consequence of this classification, we obtain a detailed spectral
  characterization for a wide class of radial Schr\"odinger
  operators. The results given here make use of de Branges Hilbert
  space techniques.
\end{minipage}
\end{center}
\bigskip
\thispagestyle{empty}
\end{titlepage}


\section{Introduction}

This paper deals with the spectral analysis of selfadjoint
operators arising from the differential expression
\[
 -\frac{d^2}{dx^2} + \frac{l(l+1)}{x^2} + q(x),
\quad x\in(0,1),\quad  l\ge-\frac12,
\]
along with separated selfadjoint boundary conditions.  These operators
describe the radial part of the Schr\"odinger operator for a particle
confined to a ball of finite radius, when the potential is spherically
symmetric. Here we show that this kind of differential operators can
be related naturally to a class of symmetric operators recently
introduced in the literature \cite{IV}, the so-called $n$-entire
operators. As such, we are able to give a spectral characterization of
these operators for a wide class of functions $q(x)$. Namely, we show
(for $l>-1/2$) that if $xq(x)$ lies in $L_p(0,1)$, $p>2$, then the
spectra of any two selfadjoint realizations satisfy conditions of
convergence and have certain regularity in their distribution (for
details see Corollary~\ref{cor:main}; this result holds also for
$l=-1/2$).  To our knowledge, this characterization is new for this
class of operators.

This work has been motivated by several relatively new
developments. Firstly, recent developments on one-dimensional
Schr\"odinger operators with singular potentials, in particular, the
perturbed Bessel operators
\cite{albeverio,carlson,carlson2,eckhardt0,eckhardt,eckhardt2,everitt,fulton1,
  fulton2,gesztesy,guillot,kostenko,kostenko2,kostenko2.5,kostenko3,lesch,serier}.
Secondly, the approach by Remling to the inverse spectral analysis of
regular Schr{\"o}dinger operators by means of de Branges space theory
\cite{remling}. Finally, our generalization \cite{IV} of Krein's
theory of entire operators \cite{krein1,krein2,krein3} (for a review
see \cite{gorbachuk}), which in its turn was inspired by Woracek's
work on $n$-associated functions of a de Branges space
\cite{langer-woracek,woracek2}. We note in passing that de Branges
space techniques have also been used in conection with the inverse
spectral analysis of differential operators in
\cite{eckhardt0,eckhardt}.

Among the classes of $n$-entire operators one finds the classes of entire
operators and entire operators in the generalized sense. The latter
classes were introduced by Krein to treat various problems of
analysis, particularly, spectral theory \cite{gorbachuk}. The
construction of $n$-entire operators was possible due to recent
results in de Branges space theory \cite{woracek2}.

Our approach for dealing with the singular Schr\"odinger operators
under consideration is based on a particular kind of perturbative
method in which the conclusion of Theorem~\ref{thm:spaces-equality} is
a crucial fact.  Due to the functional model given in \cite{IV}, one
has that to any potential $q(x)$ (in the class considered), and any
$l\ge -1/2$, there corresponds a de Branges space. Remarkably, as
Theorem~\ref{thm:spaces-equality} establishes, the set of functions in
the de Branges space for the free operator ($q(x)\equiv 0$) equals the
set of functions in the de Branges space for the perturbed operator.
Since we also show in Theorem~\ref{thm:free-operator-is-n-entire} that
the unperturbed radial Schr\"odinger operator is $n$-entire as long as
$n>\frac{l}{2}+\frac{3}{4}$, Theorem~\ref{thm:spaces-equality} allows
us to extend this assertion to the whole range of potential functions
under consideration (Theorem~\ref{thm:main}). This in its turn implies
the distinctive distributional properties of the spectra of the
associated selfadjoint realizations (Corollary~\ref{cor:main}).  We
point out that Theorem~\ref{thm:main} and Corollary~\ref{cor:main}
illustrate an application of the notion of $n$-entire operators to the
direct spectral analysis of Schr\"odinger operators.

The exposition is organized as follows. In the next section we give
the preparatory material, lay out the notation, introduce the relevant
notions, and recall some formulae needed later. In Section 3 we study
the free particle case, that is, $q(x)\equiv 0$. Finally, in Section
4, we discuss the main results of this work.

\section{Preliminaries}

In this section we give a brief recollection of the main theoretical
ingredients that will be used in this paper. Concretely, we touch upon
the distinct features of $n$-entire operators, and give a brief account
on the theory of de Branges Hilbert spaces.


\subsection{de Branges Hilbert spaces}
L. de Branges introduced a class of Hilbert space of entire functions which has
several distinguishing characteristics. These spaces can be defined in
various ways. For our choice of the definition we need two
ingredients. The first one is the Hardy space
\begin{equation*}
  H_2^+:=\{f(z)\text{ is holomorphic in }\C^+: \sup_{y>0}\int_\R\abs{f(x+\I y)}^2dx<\infty\},
\end{equation*}
where $\C^+=\{z\in\C:\im z>0\}$. The second one is an Hermite-Biehler
function, viz., an entire function $e(z)$ satisfying
$\abs{e(z)}>\abs{e(\cc{z})}$ for all $z\in\C^+$. Now, define
\begin{equation*}
  \cB(e):=\{f(z)\text{ entire}: f(z)/e(z), f^\#(z)/e(z)\in H_2^+\},
\end{equation*}
where $f^\#(z)=\cc{f(\cc{z})}$. Then we call the space
$\cB(e)$ endowed with the inner product
\begin{equation}
  \label{eq:dB-inner-product}
  \inner{g}{f}:=\int_\R\frac{\cc{g(x)}f(x)}{\abs{e(x)}^2}dx
\end{equation}
the de Branges Hilbert space generated by the Hermite-Biehler function
$e(z)$.
Note that the inner product is well defined since both $f(z)/e(z)$ and
$f^\#(z)/e(z)$ belong to $H_2^+$.
\begin{remark}
\label{rem:alternative-dB}
In view of \cite[Theorem 5.19]{rr}, $\cB(e)$ is the set of all entire
functions $f$ satisfying
\begin{equation}
  \label{eq:dB-integrability}
  \int_\R\abs{\frac{f(x)}{e(x)}}^2dx<\infty
\end{equation}
and
\begin{equation}
  \label{eq:dB-Cauchy}
  \frac{1}{2\pi\I}\int_\R h_i(x)\frac{dx}{x-z}=
  \begin{cases}
    h_i(z),& z\in\C^+,\\
    0, & z\in\C^-.
  \end{cases}\quad i=1,2\,,
\end{equation}
where $h_1(z)=f(z)/e(z)$ and $h_2(z)=f^\#(z)/e(z)$.
\end{remark}

Alternative characterizations of de Branges Hilbert spaces are
provided in \cite[Proposition 2.1]{remling} and also \cite[Chapter
2]{debranges}.  It is also possible to define de Branges spaces
without relying on a given Hermite-Biehler function \cite[Problem
50]{debranges}. However, for all de Branges spaces considered below,
there is always an explicit Hermite-Biehler function.

To any de Branges space there corresponds a space of associated functions
\cite[Definition 4.4]{kaltenback}. In this work we are interested in the
generalization of this notion given in \cite{langer-woracek}. For any
$\cB(e)$ and $n\in\Z^+=\N\cup\{0\}$, let
\begin{equation}
\label{eq:n-assoc-functions}
\assoc_n\cB(e) := \cB(e) + z\cB(e)+\cdots+z^n\cB(e).
\end{equation}
This is the so-called space of $n$-associated functions that was
introduced in the context of intermediate Weyl coefficients.

\subsection[n-entire operators]{$\mb{n}$-entire operators}

Given a separable Hilbert space $\cH$, let $\ournewclass$ be the class
of regular, closed symmetric operators on $\cH$, whose deficiency
indices are both equal to $1$; we recall that a closed operator $A$ is
regular if for every $z\in\C$ there exists $k_z>0$ such that
$\norm{(A-zI)\varphi}\ge k_z\norm{\varphi}$ for every
$\varphi\in\dom(A)$.  Well known properties of operators of this kind
are discussed in \cite{IV}. Relevant to this work is that for any
operator in $\ournewclass$, one can construct a de Branges space
\cite[Section~2.3]{IV}.

\begin{definition}
Given $n\in\Z^+$, $A\in\ournewclass$ is $n$-entire if and only if there exist
$n+1$ vectors $\mu_0,\ldots,\mu_n\in\cH$ such that
\begin{equation}
\label{eq:n-entire}
\cH = \ran(A-zI)\dotplus\Span\{\mu_0+z\mu_1+\cdots+z^n\mu_n\},
\end{equation}
for all $z\in\C$.
\end{definition}

This definition extends the notions of entire operator and
entire operator in the generalized sense that were introduced
by M. G. Krein as tools to study several problems
in classical analysis \cite{gorbachuk}.

The next result is shown in \cite[Propositions 2.14, 3.7 and 3.11]{IV}.

\begin{theorem}
\label{thm:n-entire}
Let $A\in\ournewclass$. Then the following statements are
equivalent:
\begin{enumerate}
\item $A$ is $n$-entire.

\item The space of $n$-associated functions of the de Branges space
  	related to $A$ (see \cite[Section~2.3]{IV}) contains a 
  	zero-free function.

\item There exist $n+1$ vectors $\eta_0,\ldots,\eta_n\in\cH$ such that
	\[
	\cH = \ran(A-zI)\dotplus\Span\{\eta_0+z\eta_1+\cdots+z^n\eta_n\},
	\]
	for all $z\in\C$ with the possible exception of a finite number of
	points.

\item Let $A_{\beta_1}$ and $A_{\beta_2}$, $\beta_1\ne \beta_2$,
	be 	canonical selfadjoint extensions of $A$. Set
	$\{x_j\}_{j\in\mathbb{N}}
	=\{x_j^+\}_{j\in\mathbb{N}}\cup\{x_j^-\}_{j\in\mathbb{N}}
	=\spec(A_{\beta_1})$, where $\{x_j^+\}_{j\in\mathbb{N}}$ and
  	$\{x_j^-\}_{j\in\mathbb{N}}$ are the sequences of positive,
  	respectively non-positive, elements of $\spec(A_{\beta_1})$,
  	arranged according to increasing modulus. Then the following
  	assertions hold true:
	\begin{enumerate}[(C1)]
	\item The limit
		$\displaystyle{\lim_{r\to\infty}\sum_{0<|x_j|\le r}
		\frac{1}{x_j}}$
		exists.
	\item $\displaystyle{\lim_{j\to\infty}\frac{j}{x_j^{+}}
		=- \lim_{j\to\infty}\frac{j}{x_j^{-}}<\infty}$.
	\item Denoting $\spec(A_\beta)=\{b_j\}_{j\in\mathbb{N}}$ for an arbitrary
		canonical selfadjoint extension of $A$,
		define\\[3mm]
		$\displaystyle
		h_\beta(z):=\left\{\begin{array}{ll}
			\displaystyle{\lim_{r\to\infty}\prod_{|b_j|\le r}
			\left(1-\frac{z}{b_j}\right)}
				& \mbox{ if }0\not\in\sigma(A_\beta),
			\\
			\displaystyle{z\lim_{r\to\infty}\prod_{0<|b_j|\le r}
			\left(1-\frac{z}{b_j}\right)}
				& \mbox{ otherwise. }
			   \end{array}\right.
		$\\[2mm]
		The series
		$\displaystyle{
		\sum_{x_j\ne 0}\abs{\frac{1}
		{x_j^{2n}h_{\beta_2}(x_j)h_{\beta_1}'(x_j)}}}$ is
		convergent.
\end{enumerate}
\end{enumerate}
\end{theorem}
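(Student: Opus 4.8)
The plan is to route everything through the functional model of \cite[Section~2.3]{IV}: it attaches to each $A\in\ournewclass$ a de Branges space $\cB(e)$ together with a unitary map $\Phi\colon\cH\to\cB(e)$ carrying $\ran(A-zI)$ onto $\{f\in\cB(e):f(z)=0\}$ for every $z\in\C$. Because $A$ is regular its deficiency is constant over all of $\C$, so this subspace is the kernel of the bounded, non-trivial evaluation functional at $z$ and has codimension one. With this dictionary $(1)\Leftrightarrow(2)$ is essentially bookkeeping. Given $\mu_0,\dots,\mu_n$, put $g_k:=\Phi\mu_k\in\cB(e)$ and $g(z):=\sum_{k=0}^n z^k g_k(z)\in\assoc_n\cB(e)$. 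Since $\Phi(\mu_0+z\mu_1+\cdots+z^n\mu_n)$ is the function $w\mapsto\sum_k z^k g_k(w)$, applying $\Phi$ to \eqref{eq:n-entire} shows that the decomposition holds at a given $z$ precisely when this function leaves the codimension-one subspace, i.e.\ when its value at $w=z$, which is $g(z)$, is non-zero. Hence \eqref{eq:n-entire} holds for all $z$ iff $g$ is zero-free — statement $(2)$; conversely any zero-free element of $\assoc_n\cB(e)$ factors as $\sum_{k=0}^n(\cdot)^kg_k$ with $g_k\in\cB(e)$, and $\mu_k:=\Phi^{-1}g_k$ works.

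For $(1)\Leftrightarrow(3)$ only $(3)\Rightarrow(1)$ needs work: via the same dictionary $(3)$ says that $\assoc_n\cB(e)$ contains some $g\ne0$ with a finite zero set, and I would upgrade $g$ to a zero-free function by peeling off its zeros one at a time. If $g=\sum_{k=0}^n(\cdot)^kg_k\in\assoc_n\cB(e)$ and $g(z_0)=0$, the identity $\tfrac{z^k-z_0^k}{z-z_0}=z^{k-1}+z_0z^{k-2}+\cdots+z_0^{k-1}$ yields $\tfrac{g(z)}{z-z_0}=\sum_{k=1}^n(z^{k-1}+\cdots+z_0^{k-1})g_k(z)+\tfrac1{z-z_0}\sum_{k=0}^n z_0^k g_k(z)$, whose first summand lies in $\assoc_{n-1}\cB(e)$ and whose second is $\tfrac{h}{z-z_0}$ with $h:=\sum_k z_0^k g_k\in\cB(e)$ and $h(z_0)=g(z_0)=0$; and $\tfrac{h}{z-z_0}\in\cB(e)$ because $\cB(e)$ — whose generating function has no real zeros, the point evaluations of the model never vanishing — is closed under $f\mapsto f/(z-z_0)$ whenever $f(z_0)=0$ (the norm-preserving identity $f\mapsto\tfrac{z-\cc{z_0}}{z-z_0}f$ handles non-real $z_0$, a direct Hardy-space computation settles the real case). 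Iterating clears all zeros, giving a zero-free element of $\assoc_n\cB(e)$, hence $(2)$ and $(1)$.

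The substantial implication is $(1)\Leftrightarrow(4)$, and here I would bring in the concrete de Branges model built from two canonical selfadjoint extensions $A_{\beta_1},A_{\beta_2}$: the reproducing kernel of $\cB(e)$ at $x_j\in\spec(A_{\beta_1})$ is a constant times $h_{\beta_1}(z)/(z-x_j)$, these kernels are mutually orthogonal, their squared norms are comparable to $\abs{h_{\beta_1}'(x_j)h_{\beta_2}(x_j)}$, and $\assoc_n\cB(e)$ is itself (equivalent to) a de Branges space whose weight is that of $\cB(e)$ multiplied by $\abs{x}^{2n}$ — which is where the exponent $x_j^{2n}$ in (C3) comes from. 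From $(1)$ one has a zero-free $g\in\assoc_n\cB(e)$; since $g$ is assembled from $\cB(e)$-functions, $g/e$ and $g^\#/e$ are of bounded type in $\C^+$, and a de Branges space argument then shows that a zero-free such $g$ can only exist when $e$, hence each canonical product $h_\beta$, is of finite exponential type, which forces $g=c\,e^{bz}$; the zero-distribution theory of real entire functions of finite exponential type then delivers the linear counting and conditional summation expressed by (C1)--(C2), while expanding $c\,e^{bz}$ along the interpolation system $\{z^m h_{\beta_1}(z)/(z-x_j)\}$ forces the convergence in (C3). Conversely, starting from (C1)--(C3): (C2) makes $h_{\beta_1},h_{\beta_2}$ entire of finite type, (C1) lets one form a Hermite-Biehler $e$ with model space $\cB(e)$, and (C3) is precisely the assertion that the candidate function interpolating the values $\{h_{\beta_2}(x_j)\}$ on $\spec(A_{\beta_1})$ — up to normalization $\sum_j\tfrac{h_{\beta_2}(x_j)}{h_{\beta_1}'(x_j)}\,\tfrac{z^n h_{\beta_1}(z)}{z-x_j}$ — converges in $\assoc_n\cB(e)$ to something zero-free, whence $(2)$. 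The hard part will be exactly this last step: controlling the convergence of that interpolation series \emph{inside} $\assoc_n\cB(e)$ and matching its summability, together with the density and balancing information, against (C1)--(C3); this is the passage from the finite-rank divisions of the previous paragraph to their infinite-rank limit, which is where genuine de Branges space theory — not just elementary functional analysis — is indispensable.
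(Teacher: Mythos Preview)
The paper does not prove this theorem: it cites \cite[Propositions 2.14, 3.7 and 3.11]{IV}, and the remark immediately afterwards makes explicit that the equivalence $(2)\Leftrightarrow(4)$ is routed through Woracek's characterization \cite[Theorem~3.2]{woracek2} of when $\assoc_n\cB(e)$ contains a real zero-free function. Your handling of $(1)\Leftrightarrow(2)$ and of $(3)\Rightarrow(1)$ via the functional model and zero-division in $\cB(e)$ is correct and is presumably close to what \cite{IV} does for those pieces.

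Your sketch of $(1)\Leftrightarrow(4)$, by contrast, does not go through as written. In the forward direction you can indeed argue that a zero-free $g\in\assoc_n\cB(e)$ is of bounded type in both half-planes, hence of exponential type, hence $g(z)=c\,e^{bz}$; but (C1)--(C2) are statements about the zero set of $h_{\beta_1}$, and you give no mechanism by which the mere membership of $c\,e^{bz}$ in $\assoc_n\cB(e)$ constrains that zero distribution --- the passage from ``$g$ has exponential type'' to ``$e$, hence each $h_\beta$, has finite exponential type'' is asserted, not argued. In the reverse direction your interpolation series is aimed at the wrong target: evaluating $\sum_j \tfrac{h_{\beta_2}(x_j)}{h_{\beta_1}'(x_j)}\,\tfrac{z^n h_{\beta_1}(z)}{z-x_j}$ at $z=x_k$ yields $x_k^n h_{\beta_2}(x_k)$, so if it converges and the $\{x_j\}$ form a uniqueness set it reproduces $z^n h_{\beta_2}(z)$ --- which vanishes on $\spec(A_{\beta_2})$ and is therefore not the zero-free function you need. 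What actually has to be placed in $\assoc_n\cB(e)$ is an exponential, and matching the square-summability of its sampling coefficients against (C3), together with extracting (C1)--(C2) from the de Branges structure, is precisely the content of \cite[Theorem~3.2]{woracek2}. You concede at the end that this step requires ``genuine de Branges space theory,'' but the concrete route you outline does not lead there; the honest proof here is the citation.
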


\begin{remark}
  In \cite[Theorem~3.2]{woracek2}, necessary and sufficient conditions
  for the existence of a real zero-free entire function in the space
  of $n$-associated functions of any de Branges space are
  provided. Therefore, assertions  2 and 4 of the previous theorem are
  connected via Woracek's result.
\end{remark}

The last assertion of the preceding theorem, can be interpreted as a
spectral characterization of $n$-entire operators. We use this
characterization for the class of Schr{\"o}dinger operators discussed
in this work, cf. Theorem~\ref{thm:main}.

\section{The radial Schr\"odinger operator: Free particle case}

In this section we consider the spectra of the selfadjoint operators
(realized by separated boundary conditions),
associated with the differential expression
\begin{equation}
\label{eq:differential-expression-free}
\tau_l := -\frac{d^2}{dx^2} + \frac{l(l+1)}{x^2},
	\quad x\in(0,1),\quad  l\ge-\frac12,
\end{equation}
from the perspective of Theorem~\ref{thm:n-entire}. In particular,
we study conditions for the validity of (C1), (C2), and (C3).

We start by recalling some basic facts. The differential expression
$\tau_l$ is regular at $x=1$. At $x=0$ it is in the limit
point case for $l\ge 1/2$ and limit circle case for $l\in[-1/2,1/2)$. In
the latter case it is common to add the boundary condition
\begin{equation}
\label{eq:boundary-condition-at-0}
\lim_{x\to 0}x^l\left[(l+1)\varphi(x) - x\varphi'(x)\right]=0.
\end{equation}
A comprehensive investigation of possible boundary conditions (at
$x=0$) can be found in \cite{bulla}.

The differential expression $\tau_l$, supplemented with the boundary
condition \eqref{eq:boundary-condition-at-0} when required, gives rise
to a closed, regular, symmetric operator with deficiency indices both
equal to $1$.
We will denote this operator as $H_l$. The
associated one-parameter family of canonical selfadjoint extensions
$H_{l,\beta}$, with $\beta\in[0,\pi)$, is described by
boundary conditions at the right endpoint \cite{albeverio,carlson}.
Namely,
\[
\dom(H_{l,\beta})
	:= \left\{\begin{gathered}\varphi(x)\in L^2(0,1)\cap\text{AC}^2(0,1]:
		(\tau_l\varphi)(x)\in L^2(0,1),\\
		\varphi(1)\cos\beta = \varphi'(1)\sin\beta
		\end{gathered}\right\}
\]
plus the boundary condition \eqref{eq:boundary-condition-at-0}
for $l\in[-1/2,1/2)$. Of course, one defines
$(H_{l,\beta}\varphi)(x)=(\tau_l\varphi)(x)$ in its domain.

The equation
\[
- \varphi''(x) + \frac{l(l+1)}{x^2}\varphi(x) = z\varphi(x)
\]
has a solution in $L^2(0,1)$, namely,
\begin{equation}
\label{eq:fundamental-solution-free}
\xi_l(z,x)
	:= z^{-\frac{2l+1}{4}}\sqrt{\frac{\pi x}{2}} J_{l+\frac12}(\sqrt{z}x),
\end{equation}
where $J_{m}(z)$ is the Bessel function of order $m$.
In what follows all cut branches are
chosen along the negative real axis. By \cite[9.1.10]{abramowitz},
$\xi_l(z,x)$ turns out to be an entire function (for every value
of $x$). In fact,
\begin{equation}
\label{eq:fundamental-solution-entire}
\xi_l(z,x) = \sqrt{\pi}\left(\frac{x}{2}\right)^{l+1}
	\underbrace{\sum_{k=0}^\infty\frac{\left(-\frac14 z x^2\right)^k}
	{k!\Gamma(l+k+\frac32)}}_{\ds{=:g_l(z,x)}};
\end{equation}
note that it satisfies \eqref{eq:boundary-condition-at-0} for
all $z\in\C$. It is worth to mention that $\xi_l(z,x)$ is a non vanishing
element of $\Ker(H^*_l-zI)$ for all $z\in\C$.


\begin{theorem}
\label{thm:free-operator-is-n-entire}
Given $l\ge -\frac12$, the operator $H_l$ is $n$-entire, $n\in\Z^+$,
if and only if $n>\frac{l}{2}+\frac34$.
\end{theorem}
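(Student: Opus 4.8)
The plan is to use the equivalence of items 1 and 2 in Theorem~\ref{thm:n-entire}: $H_l$ is $n$-entire iff the space $\assoc_n\cB(e_l)$ of $n$-associated functions of the de Branges space attached to $H_l$ contains a zero-free entire function. So the first step is to identify that de Branges space explicitly. Using the functional model of \cite[Section~2.3]{IV}, the relevant Hermite--Biehler function $e_l$ should be built from two real linear combinations of $\xi_l(z,1)$ and a second solution (evaluated at the regular endpoint $x=1$), i.e. from boundary values of the Weyl solution; concretely $\cB(e_l)$ will be a de Branges space whose elements are, up to an entire factor of fixed exponential type zero, controlled by the growth of $\xi_l(z,1)=\sqrt{\pi}\,2^{-(l+1)}g_l(z,1)$. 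From \eqref{eq:fundamental-solution-entire} one reads off that $g_l(\cdot,1)$ is an entire function of order $\tfrac12$ and, more precisely, of the refined growth governed by the Bessel asymptotics $J_{l+\frac12}(\sqrt z\,)$; its zeros are real (the squares of the Bessel zeros), so the candidate zero-free functions in $\assoc_n\cB(e_l)$ must absorb that zero set.

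The second step is the growth/counting estimate. A zero-free entire function $f$ lies in $\assoc_n\cB(e_l)=\cB(e_l)+z\cB(e_l)+\cdots+z^n\cB(e_l)$ essentially when $f/e_l$ decays like $|x|^{-n}$ times an $L_2$ function on $\R$; since $|e_l(x)|$ grows on the real axis like the Bessel factor $|x|^{-(l+1)/2}\cdot(\text{oscillatory})$ — more carefully $|e_l(x)|^2\asymp |x|^{-l-\frac12}$ up to bounded oscillation, coming from $J_{l+1/2}(\sqrt x\,)\sim x^{-1/4}\cos(\cdots)$ — the integrability condition \eqref{eq:dB-integrability} for $f/(z^n e_l)$ becomes a condition of the form $\int_1^\infty |f(x)|^2 x^{-2n-l-\frac12}\,dx<\infty$. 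Balancing this against the minimal growth of a zero-free entire function (which, being zero-free, must be of the form $e^{p(z)}$ and for the symmetry class at hand essentially a constant, so the binding constraint is really that $z^{2n}|e_l(x)|^{-2}$ be integrable at infinity, i.e. the tail $\int^\infty x^{2n}\,x^{l+\frac12}\,dx$... ) — the correct bookkeeping is to translate everything into condition (C3) of Theorem~\ref{thm:n-entire} with $h_{\beta}$ the canonical product over the spectrum: for the free operator the spectrum is asymptotically $x_j\sim (\pi j)^2$ (squares of Bessel zeros, by \cite{abramowitz}), $h_\beta(x_j)$ and $h'_\beta(x_j)$ have known Bessel-type asymptotics, and the series $\sum_{x_j\ne0}|x_j^{2n}h_{\beta_2}(x_j)h'_{\beta_1}(x_j)|^{-1}$ is then a concrete series in $j$ whose exponent one computes; it converges precisely when $4n - (2l+3) > 1$, i.e. $n>\tfrac{l}{2}+\tfrac34$. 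The third step is the converse: when $n\le\tfrac{l}{2}+\tfrac34$ the same computation shows the series in (C3) diverges (equivalently, no zero-free function can sit in $\assoc_n\cB(e_l)$ because the forced growth exceeds what the space allows), so $H_l$ is not $n$-entire.

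I expect the main obstacle to be the second step: pinning down the precise growth of the Hermite--Biehler function $e_l$ on the real axis (and of $h_\beta$, $h'_\beta$ on the spectrum) with enough uniformity that the threshold exponent is exact rather than off by an additive constant. This requires the full Bessel asymptotics for $J_{\nu}(\sqrt x\,)$ and $J_{\nu}'(\sqrt x\,)$ as $x\to\infty$ \cite{abramowitz}, control of the oscillatory factors so that they contribute neither to convergence nor divergence, and care at the borderline case $n=\tfrac{l}{2}+\tfrac34$ (when it is an integer) to confirm strict inequality is needed. Once the asymptotics are in hand, deciding convergence of the resulting $p$-series is routine, and the equivalence $1\Leftrightarrow 2\Leftrightarrow 4$ from Theorem~\ref{thm:n-entire} packages the result.
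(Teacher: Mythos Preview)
Your final strategy---verify condition (C3) of Theorem~\ref{thm:n-entire} by plugging in Bessel asymptotics and reducing to a $p$-series---is exactly what the paper does, so the overall plan is sound. However, the proposal as written has one genuine gap and a couple of slips that would derail the computation.

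The gap is that you never say \emph{which} pair $\beta_1\ne\beta_2$ you use. For (C3) you need explicit control of $h_{\beta_2}(x_m)$ and $h'_{\beta_1}(x_m)$ on the spectrum of $H_{l,\beta_1}$, and for a generic $\beta$ the zeros of $\xi_l(z,1)\cos\beta-\xi'_l(z,1)\sin\beta$ are not given by a single Bessel factor. The paper's key device is to take $\beta_1=0$ (so $\spec(H_{l,0})=\{j_{l+1/2,m}^2\}$) and then choose $\beta_2=\beta_l$ with $\cot\beta_l=l+1$, which kills the $J_{l+1/2}$ term in $\xi'_l(z,1)-\xi_l(z,1)\cot\beta$ and leaves $\spec(H_{l,\beta_l})=\{0\}\cup\{j_{l+3/2,m}^2\}$. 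With this pair both $h_0$ and $h_{\beta_l}$ are closed-form Bessel products, $h'_0$ is computed via the differentiation formula for $z^{-\nu}J_\nu$, and the (C3) summand becomes a constant times $x_m^{-(2n-l-1/2)}[J_{l+3/2}(\sqrt{x_m})]^{-2}$. Without this choice (or an equivalent trick) your outline does not produce a series you can actually test.

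Two smaller corrections. First, your real-axis growth for $e_l$ is off: from $\xi_l(w^2,1)\sim w^{-l-1}\sin(\cdots)$ and $\xi'_l(w^2,1)\sim w^{-l}\cos(\cdots)$ one gets $|e_l(x)|^2\asymp x^{-l}$ (up to bounded oscillation), not $x^{-l-1/2}$; this matters if you try to run the argument through item~2 rather than item~4. Second, your stated convergence threshold ``$4n-(2l+3)>1$'' does not give $n>\tfrac{l}{2}+\tfrac34$; the correct count, after inserting $[J_{l+3/2}(\sqrt{x_m})]^2\sim c\,m^{-1}$ and $x_m\sim c'\,m^2$, yields the exponent $4n-2l-2$, and the series converges iff $4n-2l-2>1$, i.e.\ $n>\tfrac{l}{2}+\tfrac34$. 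Finally, the detour through item~2 (zero-free functions in $\assoc_n$) is not needed and your remark that such functions are ``essentially constants'' is not right in general; the paper stays entirely within the spectral criterion (C1)--(C3).
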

\begin{proof}
  In view of Theorem~\ref{thm:n-entire}, it suffices to find two
  selfadjoint realizations of \eqref{eq:differential-expression-free}
  whose spectra satisfy conditions (C1), (C2) and (C3).  According to
  \cite{kostenko},
\[
\spec(H_{l,\beta})
	= \{\text{zeros of }\xi_l(z,1)\cos\beta-\xi'_l(z,1)\sin\beta\},
\]
where the prime denotes derivative with respect to $x$. For $\beta=0$
this becomes
\[
\spec(H_{l,0})
	= \big\{\text{zeros of }J_{l+\frac12}(\sqrt{z})\big\}
	= \big\{(j_{l+\frac12,n})^2: n\in\N\big\};
\]
here we are using the notation of \cite[Chapter 9]{abramowitz}.
For $\beta\in(0,\pi)$, we can write
\[
\spec(H_{l,\beta})
	= \{\text{zeros of } \xi'_l(z,1)-\xi_l(z,1)\cot\beta\}.
\]
Now,
\[
\xi'_l(z,x)
	= \frac12\sqrt{\frac{\pi}{2}}z^{-\frac{2l+1}{4}} x^{-\frac12}
				 	J_{l+\frac12}(\sqrt{z}x)
			    + \sqrt{\frac{\pi}{2}}z^{-\frac{2l-1}{4}} x^{\frac12}
				 	J'_{l+\frac12}(\sqrt{z}x).
\]
By invoking the recurrence relation $J'_m(z)=-J_{m+1}(z)+mz^{-1}J_m(z)$,
we obtain
\[
\xi'_l(z,1) = \sqrt{\frac{\pi}{2}}(l+1)
					z^{-\frac{2l+1}{4}}J_{l+\frac12}(\sqrt{z})
			    - \sqrt{\frac{\pi}{2}}z^{-\frac{2l-1}{4}}
				 	J_{l+\frac32}(\sqrt{z}).
\]
Thus,
\begin{multline*}
\xi'_l(z,1)-\xi_l(z,1)\cot\beta
	= \sqrt{\frac{\pi}{2}}(l+1-\cot\beta)
					z^{-\frac{2l+1}{4}}J_{l+\frac12}(\sqrt{z})
					\\
			    - \sqrt{\frac{\pi}{2}}z^{-\frac{2l-1}{4}}
				 	J_{l+\frac32}(\sqrt{z}).
\end{multline*}
The zeros of this expression are easy to describe if
$\beta=\beta_l$, where $\beta_l$ satisfies $\cot\beta_l=l+1$.
In this case we can write
\[
\xi'_l(z,1)-\xi_l(z,1)\cot\beta_l = - \sqrt{\frac{\pi}{2}}z
		\underbrace{z^{-\frac{2l+3}{4}}J_{l+\frac32}(\sqrt{z})}_
		{\mathclap{\ds{\text{proportional to } g_{l+1}(z,1)}}},
\]
where $g_l(z,x)$ is defined as in \eqref{eq:fundamental-solution-entire} and
clearly $g_l(0,1)\ne 0$. Therefore,
\[
\spec(H_{l,\beta_l})
	= \big\{\text{zeros of } zg_{l+1}(z)\big\}
	= \big\{0\big\}\cup\big\{(j_{l+\frac32,n})^2: n\in\N\big\}.
\]

In what follows we will consider $\spec(H_{l,0})$ and $\spec(H_{l,\beta_l})$.
At this point it is convenient to recall the
asymptotic formulae \cite[9.2.1 and 9.5.12]{abramowitz},
\begin{gather}
J_{l+\frac32}(u)
	= \sqrt{\frac{2}{\pi u}}\left[\cos\left(u-\tfrac{\pi}{2}l-\pi\right)
					+ O\left(u^{-1}\right)\right],\quad u\to +\infty,
	\label{eq:asymptotics-for-bessel}
\\[4mm]
\sqrt{x_m} = j_{l+\frac12,m}
		   = \frac{\pi}{2}(2m+l) + O\left(m^{-1}\right),\quad m\to +\infty.
		     \label{eq:asymptotics-for-eigenvalues}
\end{gather}
Clearly, the latter implies that (C1) and (C2) hold irrespective
of the value of $l$.
In order to analyze the validity of (C3), let us compute the following 
functions associated with the spectra of $H_{l,0}$ and $H_{l,\beta_l}$,
\[
h_0(z) := \prod_{n=1}^\infty\left[1-\frac{z}{(j_{l+\frac12,n})^2}\right],
\quad
h_{\beta_l}(z)
	:= z\prod_{n=1}^\infty\left[1-\frac{z}{(j_{l+\frac32,n})^2}\right].
\]
Using \cite[9.5.10]{abramowitz}, we obtain
\[
h_0(z) = 2^{l+\frac12}\Gamma(l+\tfrac32)
			z^{-\frac{2l+1}{4}}J_{l+\frac12}(\sqrt{z}),
\,\,
h_{\beta_l}(z) = 2^{l+\frac32}\Gamma(l+\tfrac52)
					z^{-\frac{2l-1}{4}}J_{l+\frac32}(\sqrt{z}).
\]
We also need the derivative of $h_0(z)$. Using
\cite[9.1.30]{abramowitz},
\[
\frac{d}{dz}\left[z^{-\frac{2l+1}{4}}J_{l+\frac12}(\sqrt{z})\right]
	= -\frac12 z^{-\frac{2l+3}{4}}J_{l+\frac32}(\sqrt{z}).
\]
Therefore,
\[
h'_0(z) = - 2^{l-\frac12}\Gamma(l+\tfrac32)
			z^{-\frac{2l+3}{4}}J_{l+\frac32}(\sqrt{z}).
\]
We want to analyze the convergence of the series,
\[
\sum_{x_m\in\spec(H_{l,\beta})}\frac{1}{x_m^{2n}\abs{h_{\beta_l}(x_m)h'_0(x_m)}},
\quad n\in\Z^+.
\]
Since
\[
x_m^{2n}\abs{h_{\beta_l}(x_m)h'_0(x_m)}
	= 2^{2l+1}\Gamma(l+\tfrac32)\Gamma(l+\tfrac52)x_m^{2n-l-\frac12}
		\big[J_{l+\frac32}(\sqrt{x_m})\big]^2,
\]
this problem reduces to the convergence of
\begin{equation}
\label{eq:convergence-c3}
\sum_{x_m\in\spec(H_{l,\beta})}\frac{1}{x_m^{2n-l-\frac12}
		\big[J_{l+\frac32}(\sqrt{x_m})\big]^2}.
\end{equation}
Due to \eqref{eq:asymptotics-for-bessel} and \eqref{eq:asymptotics-for-eigenvalues}, 
it follows on one hand that
\[
\big[J_{l+\frac32}(\sqrt{x_m})\big]^2
	= \frac{4}{\pi^2(2m+l)}\underbrace{\cos^2(\pi(m-1))}_{\ds{\equiv 1}}
		+ O(m^{-2}).
\]
On the other hand,
\[
x_m^{2n-l-\frac12} = (j_{l+\frac12,m})^{4n-2l-1}
	= \big[\tfrac{\pi}{4}(2m+l)\big]^{4n-2l-1}\left[1 + O(m^{-2})\right].
\]
Thus the convergence of \eqref{eq:convergence-c3} becomes the convergence of
\[
\sum_{m=M_0}^\infty\frac{1}{(2m+l)^{4n-2l-2}\left[1 + O(m^{-1})\right]},
\]
for some $M_0$ large enough, which in turn yields our assertion.
\end{proof}

\begin{remark}
\begin{enumerate}
\item It is clear that no symmetric operator $H_l$ associated with
  $\tau_l$ can be $0$-entire (or entire according to Krein's
  definition).
\item As is readily seen from
  (\ref{eq:differential-expression-free}) and
  (\ref{eq:boundary-condition-at-0}), the case with $l=0$ corresponds
  to (minus) the Laplacian operator in $[0,1]$ with Dirichlet boundary
  condition at the left endpoint.  The operator acting the same but
  with Neumann boundary condition is discussed in \cite{IV}. As in the
  present work, there we obtain a $1$-entire operator.
\end{enumerate}
\end{remark}

\section{The radial Schr\"odinger operator: Adding a perturbation}

In this section we prove the main result of this work, namely, a
generalization of Theorem~\ref{thm:free-operator-is-n-entire}
to include a potential energy function. In other
words, we now discuss selfadjoint operators with separated
boundary conditions that arise from the differential expression
\begin{equation}
\label{eq:differential-expression}
\tau := \tau_l + q(x)
	  = -\frac{d^2}{dx^2} + \frac{l(l+1)}{x^2} + q(x).
\quad x\in(0,1),\quad  l\ge-\frac12.
\end{equation}
To start with, we assume that $q(x)$ is a real function such that
$\tilde{q}(x)\in L_1(0,1)$, where
\begin{equation}
\label{eq:definition-q-tilde}
\tilde{q}(x):=\left\{\begin{array}{ll}
					 xq(x) & l>-\tfrac12,
					 \\[1mm]
					 x(1-\log x)q(x) & l=-\tfrac12.
			         \end{array}
			   \right.
\end{equation}
Under this hypothesis, it is shown in
\cite[Theorem 2.4]{kostenko} that $\tau$ is regular at $x=1$, and in 
the limit point case (resp. limit circle case) at $x=0$ for
$l\ge 1/2$ (resp. $l\in[-1/2,1/2)$). As in the preceding
section, we impose the boundary condition \eqref{eq:boundary-condition-at-0}
in the latter case. In this way, $\tau$ gives rise to a family of
selfadjoint operators $H_\beta$, for $\beta\in[0,\pi)$, associated to the
boundary conditions $\varphi(1)\cos\beta = \varphi'(1)\sin\beta$.
These operators are the
canonical selfadjoint extensions of a certain closed, regular, symmetric
operator $H$, having deficiency indices both equal to $1$.

According to \cite[Lemma 2.2]{kostenko}, the equation
$\tau\varphi(x)=z\varphi(x)$ has a solution $\xi(z,x)$,
entire with respect to $z$, that lies in $L_2(0,1)$. This function
satisfies the following estimates \cite[Equations 2.18 and 2.10]{kostenko},
\begin{gather}
\abs{\xi(z,x)-\xi_l(z,x)}\le
	C\left(\frac{x}{1+\sqrt{\abs{z}}x}\right)^{l+1}
		e^{\abs{\im(\sqrt{z})}x}
		\int_0^x\frac{\abs{\tilde{q}(y)}}{1+\sqrt{\abs{z}}y}dy,
		\label{eq:bound-1}
\\[4mm]
\abs{\xi'(z,x)-\xi'_l(z,x)}\le
	C\left(\frac{x}{1+\sqrt{\abs{z}}x}\right)^{l}
		e^{\abs{\im(\sqrt{z})}x}
		\int_0^x\frac{\abs{\tilde{q}(y)}}{1+\sqrt{\abs{z}}y}dy,
		\label{eq:bound-2}
\end{gather}
where $\xi_l(z,x)$ is given by \eqref{eq:fundamental-solution-free}.
Moreover, $\xi(z,x)$ lies in $\Ker(H^*-zI$) for all $z\in\C$. It
also yields the spectrum of $H_\beta$ since
\[
\spec(H_\beta)
	= \{\text{zeros of }\xi(z,1)\cos\beta-\xi'(z,1)\sin\beta\}.
\]


\begin{lemma}
\label{lem:bound-due-to-q}
Suppose $\tilde{q}(x)\in L_s(0,1)$, with $1\le s\le\infty$. Then, for
$r$ such that $s^{-1} + r^{-1} = 1$,
\[
\norm{\xi(w^2,\cdot)-\xi_l(w^2,\cdot)}_2
	= e^{\abs{\im w}}
        \begin{cases}
          O\left(\abs{w}^{-l-1-1/r}\right),&  s<\infty,
          \\[2mm]
          O\left(\abs{w}^{-l-2}\log\abs{w}\right),& s=\infty,
        \end{cases}
\]
for $w\in\C$, and $\abs{w}\to\infty$.
\end{lemma}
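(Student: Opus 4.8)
The plan is to estimate the $L_2(0,1)$ norm by squaring the pointwise bound \eqref{eq:bound-1}, integrating over $x\in(0,1)$, and carefully tracking the dependence on $w$. Write $w=\sqrt z$, so that $\abs{\im(\sqrt z)}x=\abs{\im w}x\le\abs{\im w}$ on $(0,1)$; this produces the factor $e^{\abs{\im w}}$ and, after pulling it out, reduces the problem to estimating
\[
\int_0^1\left(\frac{x}{1+\abs{w}x}\right)^{2l+2}
\left(\int_0^x\frac{\abs{\tilde q(y)}}{1+\abs{w}y}\,dy\right)^{\!2}dx
\]
as $\abs{w}\to\infty$. First I would bound the inner integral using H\"older's inequality with exponents $s$ and $r$: since $\tilde q\in L_s(0,1)$,
\[
\int_0^x\frac{\abs{\tilde q(y)}}{1+\abs{w}y}\,dy
\le\norm{\tilde q}_s\left(\int_0^x\frac{dy}{(1+\abs{w}y)^r}\right)^{1/r},
\]
so the whole estimate is driven by the purely computational integral $\int_0^x(1+\abs{w}y)^{-r}dy$, which behaves like $\abs{w}^{-1}$ for $r>1$ (up to constants, and with a logarithmic correction near $r=1$, i.e. $s=\infty$).

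Next I would substitute this back and evaluate the outer integral. After the H\"older step the remaining integral is essentially
\[
\int_0^1\left(\frac{x}{1+\abs{w}x}\right)^{2l+2}\left(\int_0^x\frac{dy}{(1+\abs{w}y)^r}\right)^{2/r}dx,
\]
and I would split $(0,1)$ at $x\sim\abs{w}^{-1}$: on the region $\abs{w}x\lesssim 1$ the factor $\bigl(x/(1+\abs{w}x)\bigr)^{2l+2}\sim x^{2l+2}$ and the inner integral is $\lesssim x$, giving a contribution that is easily $O(\abs{w}^{-2l-3})$ or smaller; on the region $\abs{w}x\gtrsim 1$ one has $x/(1+\abs{w}x)\sim \abs{w}^{-1}$ and the inner integral is $\sim\abs{w}^{-1}$ (or $\abs{w}^{-1}\log(\abs{w}x)$ when $r=1$), so that region contributes $O(\abs{w}^{-2l-2-2/r})$, which dominates. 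Taking square roots yields $O(\abs{w}^{-l-1-1/r})$ in the case $s<\infty$; when $s=\infty$ (so $r=1$) the logarithmic factor from $\int_0^x(1+\abs{w}y)^{-1}dy\sim\abs{w}^{-1}\log(1+\abs{w}x)$ survives, and since $1/r=1$ here the exponent is $-l-2$, producing the stated $O(\abs{w}^{-l-2}\log\abs{w})$.

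I expect the main obstacle to be organizing the case analysis cleanly rather than any deep difficulty: one must handle the boundary case $s=\infty$, $r=1$ separately because $\int_0^x(1+\abs{w}y)^{-1}dy$ is genuinely logarithmic, and one must be slightly careful that the outer $x$-integral of $\bigl(x/(1+\abs{w}x)\bigr)^{2l+2}$ converges at $x=0$ for all $l\ge-1/2$ (which it does, since $2l+2\ge1>-1$) and that splitting at $x\sim\abs{w}^{-1}$ correctly identifies which region dominates as $\abs{w}\to\infty$. Everything else — the extraction of $e^{\abs{\im w}}$, the H\"older step, and the final power counting — is routine once the bounds \eqref{eq:bound-1} are invoked.
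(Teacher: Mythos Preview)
Your proposal is correct and follows the same line as the paper: square the pointwise bound \eqref{eq:bound-1}, extract the factor $e^{\abs{\im w}}$, and apply H\"older with exponents $(s,r)$ to the inner integral in $\tilde q$. The only difference is in how you treat the outer $x$-integral: you split at $x\sim\abs{w}^{-1}$ and estimate each piece, whereas the paper simply bounds the factor $\bigl(\int_0^x\tfrac{\abs{\tilde q(y)}}{1+\abs{w}y}\,dy\bigr)^2$ by its supremum over $x\in(0,1)$ (an $L_1$--$L_\infty$ H\"older step) and then integrates $\bigl(\tfrac{x}{1+\abs{w}x}\bigr)^{2l+2}$ separately, which is $O(\abs{w}^{-2(l+1)})$. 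That avoids your case analysis entirely and yields the same exponent in one line; your splitting argument is perfectly valid, just slightly more laborious than necessary.
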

\begin{proof}
Consider the case $1<s<\infty$ since the remaining cases can be treated
analogously.

Let $v>0$. The H\"older inequality yields
\[
\int_0^1\frac{\abs{\tilde{q}(x)}}{1+vx}dx
	\le \norm{\tilde{q}}_s
		\left[\int_0^1\frac{dx}{(1+vx)^r}\right]^{1/r}.
\]
Integrating the second factor we readily obtain the estimate
\begin{equation}
\label{eq:behaviour-q}
\int_0^1\frac{\abs{\tilde{q}(x)}}{1+vx}dx
		= O\left(v^{-1/r}\right),\quad v\to\infty.
\end{equation}
Now, in view of \eqref{eq:bound-1} and the latter estimate, we have
\begin{multline*}
\norm{\xi(w^2,\cdot)-\xi_l(w^2,\cdot)}^2_2
\\
\begin{aligned}
&\le
	C^2e^{2\abs{\im w}}\int_0^1\underbrace{\left(
	\frac{x}{1+\abs{w}x}
	\right)^{2(l+1)}}_{\ds{=:\abs{a(x)}}}
	\underbrace{\left(
	\int_0^x\frac{\abs{\tilde{q}(y)}}
	{1+\abs{w}y}dy\right)^2}_{\ds{=:\abs{b(x)}}}dx
\\
&\le
	C^2e^{2\abs{\im w}}\norm{a}_1\norm{b}_\infty
\\
&\le
	C^2e^{2\abs{\im w}}
	\underbrace{\int_0^1\left(\frac{x}{1+\abs{w}x}
	\right)^{2(l+1)}dx}_{\ds{=O\left(\abs{w}^{-2(l+1)}\right)}}
	\underbrace{\left(
	\int_0^1\frac{\abs{\tilde{q}(y)}}{1+\abs{w}y}dy
	\right)^2}_{\ds{=O\left(\abs{w}^{-2/r}\right)}}.
    \end{aligned}
  \end{multline*}
This completes the proof.
\end{proof}

Let us consider the following transform of a function $\varphi(x)\in L^2(0,1)$,
\begin{equation}
\label{eq:map}
\varphi(x)\mapsto\hat{\varphi}(z):=\int_0^1\xi(z,x)\varphi(x)dx.
\end{equation}
According to \cite[Theorem 3.2]{eckhardt}, the linear set
\[
\cB := \left\{\hat{\varphi}(z): \varphi(x)\in L^2(0,1)\right\}
\]
coincides with the de Branges space $\cB(e)$ generated by the
Hermite-Biehler function
\[
e(z) := \xi(z,1) + \I\xi'(z,1),\quad z\in\C.
\]
Since moreover $\xi(z,\cdot)\in\ker(H^*-zI)$, it is immediate to see that 
this construction is a particular instance of the abstract functional model 
discussed in \cite{IV}.

The proof of our next theorem makes use of an analogue
of the Paley-Wiener Theorem for the Hankel transform of a function
with compact support, a result due to Griffith \cite{griffith}.
For the sake of convenience we recall it (taken from
\cite{zemanian} with minor changes).

\begin{thm}[Griffith]
Let $z=x+\I y$ and assume $l>-1$. A function $f(z)$ has the
representation
\[
f(z) = \int_0^b\sqrt{zx}\,J_{l+\frac12}(zx)\varphi(x)dx
\]
with $b>0$ and $\varphi(x)\in L^2(0,b)$ if and only if
$f(x)\in L^2(0,\infty)$, $z^{-l-1}f(z)$ is an even entire function,
and there exists a constant $C>0$ such that
$\abs{f(z)}\le Ce^{b\abs{y}}$ for all $z\in\C$.
\end{thm}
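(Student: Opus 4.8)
This is a Paley--Wiener type theorem for the modified Hankel transform, and the natural plan is to prove necessity by direct estimation and sufficiency by a contour-rotation argument.

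For necessity, suppose $f(z)=\int_0^b\sqrt{zx}\,J_{l+\frac12}(zx)\varphi(x)\,dx$ with $\varphi\in L^2(0,b)$. From the power series for the Bessel function (cf.\ \eqref{eq:fundamental-solution-entire}) one has $\sqrt{zx}\,J_{l+\frac12}(zx)=2^{-l-\frac12}(zx)^{l+1}G(zx)$, where $G(w):=\sum_{k=0}^\infty(-w^2/4)^k/(k!\,\Gamma(l+k+\tfrac32))$ is even and entire. Hence $z^{-l-1}f(z)=2^{-l-\frac12}\int_0^b x^{l+1}G(zx)\varphi(x)\,dx$, and since $\varphi\in L^2(0,b)\subset L^1(0,b)$, Morera's theorem shows this is entire; it is even because $G$ is, which gives the second property. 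For the third, the elementary bound $\abs{\sqrt{w}\,J_{l+\frac12}(w)}\le C(\abs{w}/(1+\abs{w}))^{l+1}e^{\abs{\im w}}$, immediate from the same power series together with the large-argument asymptotics of Bessel functions, yields $\abs{\sqrt{zx}\,J_{l+\frac12}(zx)}\le C\,e^{x\abs{y}}\le C\,e^{b\abs{y}}$ for $x\in(0,b)$, so that $\abs{f(z)}\le C\norm{\varphi}_{L^1(0,b)}\,e^{b\abs{y}}$. The first property is the Hankel--Plancherel theorem: for $l>-1$ the transform with kernel $\sqrt{zx}\,J_{l+\frac12}(zx)$ is a self-reciprocal unitary operator on $L^2(0,\infty)$, so the restriction of $f$ to $(0,\infty)$ lies in $L^2(0,\infty)$.

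For sufficiency, assume $f$ has the three properties. By the Hankel--Plancherel theorem we may let $\varphi\in L^2(0,\infty)$ be the Hankel transform of $f|_{(0,\infty)}$; then, by self-reciprocity, $f(z)=\int_0^\infty\sqrt{zx}\,J_{l+\frac12}(zx)\varphi(x)\,dx$ for $z\in(0,\infty)$. Everything reduces to showing $\supp\varphi\subseteq[0,b]$: once this is known, $\varphi\in L^2(0,b)$, the integral converges for every $z\in\C$, its product with $z^{-l-1}$ is even and entire (as in the necessity part), and this function agrees with $f$ on $(0,\infty)$, hence everywhere.

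The support statement is the crux, and the step I expect to be the main obstacle. I would split $J_{l+\frac12}=\tfrac12(H^{(1)}_{l+\frac12}+H^{(2)}_{l+\frac12})$ and write $\varphi(x)=\tfrac12(I_1(x)+I_2(x))$ correspondingly. Using $H^{(1)}_{l+\frac12}(w)\sim\sqrt{2/(\pi w)}\;e^{\I(w-\frac{l+1}{2}\pi-\frac\pi4)}$ as $\abs{w}\to\infty$ in the closed upper half-plane, one gets that on that region $\abs{\sqrt{zx}\,H^{(1)}_{l+\frac12}(zx)\,f(z)}=O(e^{-(x-b)\,\im z})$, which decays for $x>b$; rotating the contour $(0,\infty)$ towards the positive imaginary axis and bounding the joining arcs by a Phragm\'en--Lindel\"of estimate, combining the exponential type of $f$ (property 3) with its $L^2$ behaviour on the real axis (property 1), then forces $I_1(x)=0$ for a.e.\ $x>b$. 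The symmetric rotation into the lower half-plane, using $H^{(2)}_{l+\frac12}$, gives $I_2(x)=0$ likewise, hence $\varphi(x)=0$ for a.e.\ $x>b$. The delicate points are the singularity of $H^{(1,2)}_{l+\frac12}$ at $z=0$, which is harmless since $z^{-l-1}f(z)$ is entire so $\sqrt{zx}\,H^{(i)}_{l+\frac12}(zx)\,f(z)$ is integrable near the origin and an indented contour suffices, and deriving the arc estimates from the merely $L^2$ control of $f$ on $\R$ rather than a pointwise bound. With this the theorem follows; alternatively, the argument is classical and one may simply cite \cite{griffith,zemanian}.
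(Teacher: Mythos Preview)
The paper does not prove this theorem at all: it is stated without proof and attributed to Griffith \cite{griffith}, with the formulation taken from \cite{zemanian}. Your closing remark, that one may simply cite \cite{griffith,zemanian}, is precisely what the paper does; there is nothing to compare your argument against.

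That said, your sketch is the standard route to results of this type and is sound in outline. Necessity is routine, exactly as you describe. For sufficiency, the decomposition $J_{l+\frac12}=\tfrac12(H^{(1)}_{l+\frac12}+H^{(2)}_{l+\frac12})$ followed by contour rotation into the upper and lower half-planes is the classical device, and you correctly identify the two points needing care: the behaviour near the origin (handled by an indentation, since $z^{-l-1}f(z)$ is entire) and the control of the large arcs from only $L^2$ data on the real axis together with the exponential-type bound. The latter is indeed the substantive step; it is typically handled via a Plancherel--P\'olya type inequality or, as in Griffith's original argument, by working instead with the Hankel transform of $f$ restricted to a finite interval and passing to the limit. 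If you intend to write out a full proof rather than cite, that arc estimate is where the real work lies and would need to be made precise.
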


We also recall the following estimate, taken from
\cite[Lemma A.1]{kostenko},
\begin{equation}
\label{eq:bound-0}
\abs{\xi_l(z,x)}
	\le C\left(\frac{x}{1+\sqrt{\abs{z}}x}\right)^{l+1}
		e^{\abs{\im(\sqrt{z})}x}.
\end{equation}
This bound holds for $l\ge -1/2$.

\begin{theorem}
\label{thm:spaces-equality}
Assume $\tilde{q}(x)\in L_s(0,1)$, with $2<s\le\infty$. Then,
$\cB=\cB_0$ (as sets), where $\cB_0$ is given by
\[
\cB_0 = \left\{\hat{\varphi}(z)=\int_0^1\xi_l(z,x)\varphi(x)dx :
		\varphi(x)\in L^2(0,1)\right\},
\]
with $\xi_l(z,x)$ given by \eqref{eq:fundamental-solution-free}.
\end{theorem}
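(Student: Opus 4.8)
The plan is to prove the set equality by a double inclusion. I will use throughout that $\cB$ is the range of the transform \eqref{eq:map}, that $\cB_0$ is by construction the range of the unitary map $U_0\colon L^2(0,1)\to\cB_0$, $\varphi\mapsto\int_0^1\xi_l(z,\cdot)\varphi$, and that Griffith's theorem yields an intrinsic description of $\cB_0$: by \eqref{eq:fundamental-solution-free} one has $\sqrt{zx}\,J_{l+\frac12}(zx)=\sqrt{2/\pi}\,z^{\,l+1}\xi_l(z^2,x)$, so that, up to the factor $z^{l+1}$ and the substitution $z\mapsto z^2$, the transform with kernel $\xi_l$ is a Hankel transform on $(0,1)$; hence $g\in\cB_0$ if and only if $g$ is entire, $\abs{g(z)}=O\!\bigl(\abs{z}^{-(l+1)/2}e^{\abs{\im\sqrt z}}\bigr)$, and $\int_0^\infty t^{\,l+1/2}\abs{g(t)}^2\,dt<\infty$.

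\emph{The inclusion $\cB\subseteq\cB_0$.} Given $\hat\varphi\in\cB$ with $\varphi\in L^2(0,1)$, I split off the free part,
\[
\hat\varphi(z)=\int_0^1\xi_l(z,x)\varphi(x)\,dx+R(z),
\qquad
R(z):=\int_0^1\bigl[\xi(z,x)-\xi_l(z,x)\bigr]\varphi(x)\,dx .
\]
The first term lies in $\cB_0$ by definition, and $R$ is entire by the estimate \eqref{eq:bound-1} (which is uniform for $z$ in compact sets). Cauchy--Schwarz together with Lemma~\ref{lem:bound-due-to-q} gives $\abs{R(w^2)}\le\norm{\xi(w^2,\cdot)-\xi_l(w^2,\cdot)}_2\norm{\varphi}_2=O\!\bigl(\abs{w}^{-l-1-1/r}\bigr)e^{\abs{\im w}}$, with an extra factor $\log\abs{w}$ when $s=\infty$. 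Since $2<s\le\infty$ forces $r<2$, i.e.\ $1/r>1/2$, a short computation shows that $R$ fulfils the three conditions displayed above; by Griffith's theorem (in the variable $w=\sqrt z$, with $b=1$) this places $R$ in $\cB_0$, and therefore $\hat\varphi\in\cB_0$.

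\emph{The inclusion $\cB_0\subseteq\cB$.} This is the delicate direction. The same decomposition applied to $g=\int_0^1\xi_l(z,\cdot)\varphi\in\cB_0$ reduces the problem to showing $R\in\cB$, and the estimate above gives $R\in\cB_0$ with the improved decay $\abs{R(z)}=O\!\bigl(\abs{z}^{-(l+1)/2-1/(2r)}e^{\abs{\im\sqrt z}}\bigr)$. One cannot repeat the Griffith step, because $\cB$ has no Paley--Wiener description; instead I would iterate. Put $\varphi^{(0)}=\varphi$ and $\varphi^{(k+1)}=T\varphi^{(k)}$, where $T:=U_0^{-1}\bigl(\psi\mapsto\int_0^1[\xi(z,\cdot)-\xi_l(z,\cdot)]\psi\bigr)$ is a bounded operator on $L^2(0,1)$ (the inclusion $\cB\hookrightarrow\cB_0$ is continuous by the closed graph theorem). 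Telescoping the decomposition gives, for every $N$,
\[
g(z)=\int_0^1\xi(z,x)\Bigl(\textstyle\sum_{k=0}^{N}(-1)^k\varphi^{(k)}\Bigr)(x)\,dx
+(-1)^{N+1}\!\int_0^1\bigl[\xi(z,x)-\xi_l(z,x)\bigr]\varphi^{(N)}(x)\,dx ,
\]
and the $\cB_0$-norm of the last term equals $\norm{\varphi^{(N+1)}}_2$; letting $N\to\infty$ one obtains $g=\int_0^1\xi(z,\cdot)\psi\in\cB$ with $\psi=(I+T)^{-1}\varphi$, \emph{provided} $I+T$ is invertible on $L^2(0,1)$. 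That invertibility is the main obstacle. It cannot be obtained from a smallness hypothesis, since $q$ is arbitrary; I expect to get it from quasinilpotency of $T$, which is the transpose of the Volterra transformation operator relating $\xi_l$ and $\xi$, namely $\xi(z,x)=\xi_l(z,x)+\int_0^x P(x,t)\,\xi_l(z,t)\,dt$ with a $z$-independent, lower-triangular kernel $P$: a bounded Volterra operator is quasinilpotent, and then $(I+T)^{-1}=\sum_{k\ge0}(-T)^k$. So the technical core of this direction is to construct $P$ and prove that it acts boundedly on $L^2(0,1)$ under the hypothesis $\tilde q\in L_s(0,1)$ with $s>2$, together with the justification of the limit $N\to\infty$ above (for which $\norm{\varphi^{(N+1)}}_2\to0$, hence the pointwise vanishing of the remainder, is used). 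An alternative route that sidesteps the iteration is to compare $e(z)=\xi(z,1)+\I\xi'(z,1)$ with $e_0(z)=\xi_l(z,1)+\I\xi_l'(z,1)$: the estimates \eqref{eq:bound-1}--\eqref{eq:bound-0} show that these Hermite--Biehler functions have the same exponential type and that $e/e_0$ and $e_0/e$ are of bounded type and of mean type zero in $\C^+$, whence de Branges' theory of ordered families of spaces gives $\cB(e)=\cB(e_0)$ as sets.
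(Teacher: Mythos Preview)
Your treatment of the inclusion $\cB\subseteq\cB_0$ is essentially the paper's argument: both use Lemma~\ref{lem:bound-due-to-q} to control the difference $\xi-\xi_l$ and then invoke Griffith's theorem (you split off the remainder $R$ and put it into $\cB_0$ separately, while the paper treats $f$ as a whole, but the estimates and the role of the condition $r<2$ are identical).

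The second inclusion, however, is not proved in your proposal; you outline two routes and leave both open. The Volterra/transformation-operator route hinges on the existence of a $z$-independent kernel $P(x,t)$ with $\xi(z,x)=\xi_l(z,x)+\int_0^x P(x,t)\xi_l(z,t)\,dt$ and with enough regularity that the resulting integral operator is bounded (hence quasinilpotent) on $L^2(0,1)$. For perturbed Bessel operators with the singularity $l(l+1)/x^2$ at $x=0$ this is a nontrivial construction: the kernel near the diagonal and near $t=0$ must be controlled solely from $\tilde q\in L_s$, $s>2$, and nothing in the paper (in particular \eqref{eq:bound-1}--\eqref{eq:bound-2}) supplies it. Your formula for $T$ and the telescoping identity are correct once $P$ is in hand, but until that kernel is produced and shown to act boundedly, the argument is incomplete. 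The alternative route (``$e/e_0$ and $e_0/e$ of bounded type and mean type zero, hence $\cB(e)=\cB(e_0)$ by de~Branges' ordering theory'') is also a gap: bounded type with mean type zero is not by itself sufficient for the set equality of two de~Branges spaces, and you do not state or verify a precise criterion.

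The paper does not attempt either of these. For $\cB_0\subseteq\cB$ it verifies directly the two conditions \eqref{eq:dB-integrability}--\eqref{eq:dB-Cauchy} of Remark~\ref{rem:alternative-dB}. For \eqref{eq:dB-integrability} it shows $\liminf_{x\to\pm\infty}\abs{e(x)/e_0(x)}>0$ by comparing the asymptotics of $\xi_l(x,1)$, $\xi_l'(x,1)$ with the error terms coming from \eqref{eq:bound-1}, \eqref{eq:bound-2}, \eqref{eq:behaviour-q}. For \eqref{eq:dB-Cauchy} it proves that $\int_0^\pi\bigl|f(A_ne^{\I\phi})/e(A_ne^{\I\phi})\bigr|\,d\phi\to 0$ along a suitable sequence $A_n\to\infty$: this requires a lower bound for $\abs{e(R^2e^{\I 2\theta})}$ on large arcs, obtained from the Bessel asymptotics \eqref{eq:olver-asymptotics} after splitting into ``large-angle'' and ``small-angle'' regimes, together with a Phragm\'en--Lindel\"of argument to close the remaining sector near the positive imaginary axis. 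These concrete estimates, rather than an appeal to a transformation operator or to an abstract de~Branges comparison, are what carry the proof.
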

\begin{proof}
To simplify the exposition, we restrict the proof to $2<s<\infty$. The
argumentation for $s=\infty$ is essentially the same.
We follow the arguments of the proof of
\cite[Theorem 4.1]{remling}.

We prove first that $\cB\subset\cB_0$. Consider
$f(z)\in\cB$ so
\[
f(z) = \int_0^1\xi(z,x)\varphi(x)dx, \quad \varphi(x)\in L^2(0,1).
\]
Define $g(w):= f(w^2)$. This function is clearly entire and even.
Moreover, 
\begin{align*}
\abs{g(w)}
	&\le \int_0^1 \abs{\xi(w^2,x)\varphi(x)} dx
	\\[2mm]
	&\le \left(\norm{\xi(w^2,\cdot)-\xi_l(w^2,\cdot)}_2 +
			\norm{\xi_l(w^2,\cdot)}_2\right)\norm{\varphi}_2
\end{align*}
By applying Lemma~\ref{lem:bound-due-to-q} to the first term of the
estimate, and using (\ref{eq:bound-0}) to obtain an upper bound for
the second term, one obtains
\begin{equation*}
  \abs{g(w)}\le C\norm{\varphi}_2e^{\abs{w}},
\end{equation*}
thus $g(w)$ is of exponential type not greater than $1$. Also,
\eqref{eq:bound-1} along with Lemma~\ref{lem:bound-due-to-q}
yields
\begin{align*}
g(w)
	&=  \int_0^1\xi_l(w^2,x)\varphi(x)dx +
		\int_0^1\left[\xi(w^2,x)-\xi_l(w^2,x)\right]\varphi(x)dx
	\\
	&= \sqrt{\frac{\pi}{2}}w^{-l-1}
		\int_0^1\sqrt{wx}\,J_{l+\frac12}(wx)\varphi(x)dx +
		O\left(\frac{1}{\abs{w}^{l+1+1/r}}\right),
\end{align*}
for $w\in\R$, $w\to\infty$, and $r<2$. This implies, by virtue
of Griffith's Theorem, that the function $h(w)=w^{l+1}g(w)$ lies
in $L^2(0,\infty)$. Since clearly $w^{-l-1}h(w)$ is even and entire,
again by Griffith's Theorem there exists $\eta(x)\in L^2(0,1)$ such
that
\[
h(w) = \sqrt{\frac{\pi}{2}}
		\int_0^1\sqrt{wx}\,J_{l+\frac12}(wx)\eta(x)dx.
\]
Therefore,
\[
f(z) = \int_0^1\xi_l(z,x)\eta(x)dx,
\]
as desired.

We now turn to the converse inclusion, that is, $\cB_0\subseteq\cB$. 
To show this, we check the conditions stated in Remark~\ref{rem:alternative-dB}.
First, we verify that
\eqref{eq:dB-integrability} holds
for all $f(z)\in\cB_0$. For this it will suffice to prove that
\begin{equation}
\label{eq:comparison-hb}
\underset{x\to\pm\infty}{\lim\inf}\abs{\frac{e(x)}{e_0(x)}}>0.
\end{equation}
Here,
\[
e(z):= \xi(z,1) + \I\xi'(z,1),\qquad e_0(z):= \xi_l(z,1) + \I\xi'_l(z,1),
\]
are the Hermite-Biehler functions associated to the de Branges spaces
$\cB$ and $\cB_0$ respectively. Note that, because of \eqref{eq:bound-1},
\eqref{eq:bound-2} and \eqref{eq:behaviour-q}, one has
\begin{equation}
\label{eq:quotient-hb-functions}
\abs{\frac{e(x)}{e_0(x)}}^2 =
	\frac{\abs{\xi_l(x,1) + O\left(\abs{x}^{-\frac{l+1+1/r}{2}}\right)}^2
	+ \abs{\xi_l'(x,1) + O\left(\abs{x}^{-\frac{l+1/r}{2}}\right)}^2}
	{\abs{\xi_l(x,1)}^2 + \abs{\xi_l'(x,1)}^2},
\end{equation}
for $x\in\R$. We will analyze the cases $x\to-\infty$ and $x\to\infty$
separately.

Set $x=-w^2$ with $w\in\R^+$. Recalling
\eqref{eq:fundamental-solution-entire},
\begin{align*}
\xi_l(-w^2,1) &= \frac{\sqrt{\pi}}{2^{l+1}}g_l(-w^2,1),\\
\xi_l'(-w^2,1) &= (l+1)\frac{\sqrt{\pi}}{2^{l+1}}g_l(-w^2,1)
	+ \frac{\sqrt{\pi}}{2^{l+1}}g_l'(-w^2,1),
\end{align*}
where $g_l(z,x)$ is given by \eqref{eq:fundamental-solution-entire} so
it readily follows that $\xi_l(-w^2,1)\to\infty$ and
$\xi_l'(-w^2,1)\to\infty$ as $w\to\infty$. Hence we obtain
\begin{equation}
\label{eq:im-so-bored}
\lim_{w\to\infty}\abs{\frac{e(-w^2)}{e_0(-w^2)}}^2 = 1.
\end{equation}

Now consider $x= w^2$ with $w\in\R^+$. By \cite[9.2.1]{abramowitz},
\[
J_m(w) = \sqrt{\frac{2}{\pi w}}
		 \left[\cos\left(w-\tfrac12m\pi-\tfrac14\pi\right)
		 + O\left(w^{-1}\right)\right],\quad w\to\infty,
\]
thus it follows that
\begin{gather*}
\xi_l(w^2,1)  = w^{-l-1}\sin\left(w-\tfrac12\pi l\right)
				+ O\left(w^{-l-2}\right),\quad \text{and}
\\[2mm]
\xi_l'(w^2,1) = w^{-l}\cos\left(w-\tfrac12\pi l\right)
				+ O\left(w^{-l-1}\right),
\end{gather*}
as $w\to\infty$. Inserting these expressions into
\eqref{eq:quotient-hb-functions} one gets
\begin{equation}
\label{eq:quotient-again}
\abs{\frac{e(w^2)}{e_0(w^2)}}^2\!\! =
	\frac{\abs{\sin\left(w\!-\!\frac{\pi l}{2}\right)
	\!+\!O\left(w^{-\frac{1}{r}}\right)}^2
	\!\!+ \abs{w\cos\left(w\!-\!\frac{\pi l}{2}\right)
	\!+\!O\left(w^{1-\frac{1}{r}}\right)}^2}
	{\abs{\sin\left(w\!-\!\frac{\pi l}{2}\right)
	\!+ O\left(w^{-1}\right)}^2
	\!\!+ \abs{w\cos\left(w\!-\!\frac{\pi l}{2}\right)
	\!+ O\left(1\right)}^2}.
\end{equation}
Suppose there exists a positive unbounded sequence
$\{w_n\}_{n=1}^\infty$ such that \eqref{eq:quotient-again} goes to
zero as $n\to\infty$. Necessarily, $w_n\cos\left(w_n-\tfrac12\pi
  l\right)$ is unbounded since otherwise it would follow that
$\cos\left(w_n-\tfrac12\pi l\right)\to 0$ as $n\to\infty$, along with
$\sin\left(w_n-\tfrac12\pi l\right)\to 1$, implying that
\eqref{eq:quotient-again} is bounded below away from zero. But if
$w_n\cos\left(w_n-\tfrac12\pi l\right)$ is unbounded, then
\eqref{eq:quotient-again} goes to 1, again contradicting our
assumption. Thus, we obtain
\[
\underset{w\to\infty}{\lim\inf}\abs{\frac{e(w^2)}{e_0(w^2)}}^2 >0,
\]
which, together with \eqref{eq:im-so-bored}, gives
\eqref{eq:comparison-hb}. Thus, we have established
\eqref{eq:dB-integrability}.

To complete the proof, we now check \eqref{eq:dB-Cauchy}. To this end,
we calculate
\begin{equation*}
  \int_{\Gamma_A}\frac{f(s)}{e(s)(s-z)}ds,\quad z\in\C^+,\quad f(z)\in\cB_0,
\end{equation*}
where $\Gamma_A=[-A,A]\cup\{Ae^{\I\phi}: \phi\in[0,\pi]\}$, and
attempt to let $A\to\infty$. For the function $f^\#(z)$ a similar
argument can be carried out.  By a standard argumentation, to prove
\eqref{eq:dB-Cauchy}, it suffices to show that
\begin{equation}
 \label{eq:limit-integral-weak}
 \lim_{n\to\infty}\int_0^\pi\abs{\frac{f(A_ne^{i\phi})}{e(A_ne^{i\phi})}}d\phi=0
\end{equation}
for some sequence $A_n\to\infty$.

Now, we implement the change of variable $z=R^2e^{\I 2\theta}$,
$0\le\theta\le\pi/2$. According to \cite[Chapter 4, Section
9]{olver}, it holds true that
\begin{equation}
  \label{eq:olver-asymptotics}
  J_m(Re^{\I \theta}) = \sqrt{\frac{2}{\pi R}}e^{-\I \theta/2}
		 \left[\cos\left(Re^{\I\theta}-\tfrac12m\pi-\tfrac14\pi\right)
		 + e^{R\sin\theta}O_\epsilon\left(R^{-1}\right)\right],
\end{equation}
as $R\to\infty$, provided that
\begin{equation}
\label{eq:excluded-sector}
 0\le\theta\le\pi/2-\epsilon\,,\qquad\epsilon>0\,.
\end{equation}
Note that we have stressed the dependence on $\epsilon$ of the
constant implicit in the asymptotic remainder by means of a
subscript.

Next, due to \eqref{eq:bound-1}, \eqref{eq:bound-2}, \eqref{eq:behaviour-q} and
\eqref{eq:olver-asymptotics}, the following holds in the
sector (\ref{eq:excluded-sector}):
\begin{align*}
  e(R^2e^{\I 2\theta})
  =&\, \xi_l(R^2e^{\I 2\theta},1)
  	 + \I\xi_l'(R^2e^{\I 2\theta},1)
  	 +  e^{R\sin\theta}O\left(\frac{1}{R^{l+1/r}}\right)
  	 \\[2mm]
  =&\, R^{-(l+1)}e^{\I(l+1)\theta}\left[\sin\left(Re^{\I\theta}-
  \frac{l\pi}{2}\right)+ e^{R\sin\theta}O_\epsilon
  \left(\frac{1}{R}\right)\right]
  	 \\
   &+ \I(l+1)R^{-(l+1)}e^{-\I\theta(l+1)}
   		\left[\sin\left(Re^{\I\theta}-\frac{l\pi}{2}\right)
   	+  e^{R\sin\theta}O_\epsilon\left(\frac{1}{R}\right)\right]
  	 \\
   &+ \I R^{-l}e^{-\I l\theta}
   		\left[\cos\left(Re^{\I\theta}-\frac{l\pi}{2}\right)
   	+
        e^{R\sin\theta}O_\epsilon\left(\frac{1}{R}\right)\right]
\\
&+ e^{R\sin\theta}O\left(\frac{1}{R^{l+1/r}}\right)
\end{align*}
as $R\to\infty$. Notice that $1<r<2$ so the latter error term is the
dominant one. Thus, asymptotically,
\begin{equation}
  \label{eq:bound-from-below}
  \abs{e(R^2e^{\I 2\theta})}\ge R^{-l}\abs{\abs{\cos\left(Re^{\I\theta}-
  l\pi/2\right)} - e^{R\sin\theta}O_\epsilon
  \left(\frac{1}{R^{l+1/r}}\right)}.
\end{equation}

We want to obtain lower bounds for $\abs{e(R^2e^{\I 2\theta})}$, with the
help of (\ref{eq:bound-from-below}), separately for large angles that satisfy
\begin{equation}
  \label{eq:large-angles}
   e^{-2R\sin\theta}< 1-\delta
\end{equation}
but
within (\ref{eq:excluded-sector}), and small ones
\begin{equation}
  \label{eq:small-angles}
   e^{-2R\sin\theta}> 1-\delta
\end{equation}
for some small $\delta>0$.

Let us first consider (\ref{eq:large-angles}). Inequality
(\ref{eq:bound-from-below}) can be rewritten as
\begin{align*}
  \abs{e(R^2e^{\I 2\theta})}\ge\; &
  \frac{1}{2}e^{R\sin\theta}R^{-l}\\
&\times
  \abs{\abs{1+e^{\I
        2(R\cos\theta-l\pi/2)}e^{-2R\sin\theta}}-
O_\epsilon\left(\frac{1}{R^{l+1/r}}\right)}.
\end{align*}
Then there exists $R_\epsilon$ such that, for all $R> R_\epsilon$,
\begin{equation*}
  \abs{e(R^2e^{\I 2\theta})}\ge
  \frac{1}{4}e^{R\sin\theta}R^{-l}
  \abs{1+e^{\I
        2(R\cos\theta-l\pi/2)}e^{-2R\sin\theta}}.
\end{equation*}
This implies
\begin{equation*}
  \abs{e(R^2e^{\I 2\theta})}
  \ge \frac{1}{4}e^{R\sin\theta}R^{-l}
  \left(1-e^{-2R\sin\theta}\right)
\end{equation*}
so that, under (\ref{eq:large-angles}), one obtains
\begin{equation}
  \label{eq:large-estimate}
  \abs{e(R^2e^{\I 2\theta})}\ge\frac{1}{4}\delta R^{-l}e^{R\sin\theta}
\end{equation}
for $R>R_\epsilon$.

Let us now consider small angles, that is (\ref{eq:small-angles}), keeping
$R\ge R_\epsilon$.
Taking into account that
\begin{align*}
  \cos(Re^{\I\theta}-l\pi/2)&=
  \frac12 e^{R\sin\theta}\\ &\times
\left[\cos(R\cos\theta-l\pi/2)+e^{\I(R\cos\theta-l\pi/2)}(e^{-2R\sin\theta}-1)
\right],
\end{align*}
one deduces from
(\ref{eq:bound-from-below}) the following asymptotic estimate
\begin{equation*}
  \abs{e(R^2e^{\I 2\theta})}\ge\frac{1}{2}
  R^{-l}e^{R\sin\theta}\abs{\cos(R\cos\theta-l\pi/2)} + O_\epsilon
  \left(\frac{1}{R^l}\right).
\end{equation*}
Consider the sequence $\{R_n\}_{n=1}^\infty$ given by
$R_n:=\pi(n+l/2)$. Noticing that
\begin{align*}
  \cos(R_n\cos\theta-l\pi/2)&=\cos(n\pi\cos\theta)
  +(\cos\theta-1)\l\pi/2)\\
  &=\cos(n\pi+
\underbrace{O(n\sin^2\theta)}_{\mathclap{\ds{O(n^{-1})\text{ due to
      \eqref{eq:small-angles},}}}})
\end{align*}
we arrive at
\begin{equation}
  \label{eq:sequence-bound}
  \abs{e(R_n^2e^{\I 2\theta})}\ge
  \frac{1}{4}\delta R_n^{-l}e^{R_n\sin\theta}
\end{equation}
for $n$ sufficiently large.

On the other hand, for $f(z)\in\cB_0$,
\begin{align}
\abs{f(R^2e^{\I 2\theta})}
	&\le \int_0^1\abs{\varphi(x)}\abs{\xi_l(R^2e^{\I 2\theta},x)}dx\nonumber
	\\
	&\le C\norm{\varphi}_2\left(\frac{1}{1+R}\right)^{l+1}
		e^{R\sin\theta},\label{eq:no-cat}
\end{align}
where $\varphi(x)$ is such that $f(z)=\hat{\varphi}(z)$ and we have 
used (\ref{eq:bound-0}).
Taking into account (\ref{eq:large-estimate}) and
(\ref{eq:sequence-bound}), the last estimate implies, for some
constant $C'>0$ and all $\theta\in[0,\pi/2-\epsilon)$,
\begin{equation}
\label{eq:bounded-quotient}
\abs{\frac{f(R^2e^{\I 2\theta})}{e(R^2e^{\I 2\theta})}}
	\le C'\frac{1}{R},\quad
\abs{\frac{f(R_n^2e^{\I 2\theta})}{e(R_n^2e^{\I 2\theta})}}
	\le C'\frac{1}{R_n},
\end{equation}
where $R>R_\epsilon$ and $R_n>R_\epsilon$, for large and
small angles, respectively.

We claim that the first estimate in (\ref{eq:bounded-quotient}) holds
also for $\theta\in[\pi/2-\epsilon,\pi/2]$. For
(\ref{eq:bounded-quotient}) implies that
\begin{equation}
\label{eq:phragmen}
\abs{\sqrt{z}\frac{f(z)}{e(z)}}
\end{equation}
is bounded along the ray $\theta=\pi/2-\epsilon$ and, since
(\ref{eq:dB-integrability}) has been verified, one concludes that
(\ref{eq:phragmen}) is also bounded along the ray $\theta=\pi/2$.  Then,
the Phragm{\'e}n-Lindel{\"o}f principle \cite[Theorem 21]{levin}
implies that \eqref{eq:phragmen} is bounded inside the angle. In this
argument, a suitable branch of $\sqrt{z}$ is assumed.

Finally, using that (\ref{eq:bounded-quotient}) holds for
$\theta\in[0,\pi/2]$, one obtains (\ref{eq:limit-integral-weak}) for
$A_n=R^2_n$. Clearly, the same argumentation used above leads to the
same result for $f^\#(z)/e(z)$ since, in view of (\ref{eq:map}),
inequality (\ref{eq:no-cat}) holds when one substitutes $f(z)$ by
$f^\#(z)$ and $\varphi(x)$ by $\cc{\varphi(x)}$. Thus, we have shown
(\ref{eq:dB-Cauchy}) to be true.
\end{proof}

\begin{theorem}
\label{thm:main}
Let $l\ge -\frac12$.  Let $H$ be the regular, symmetric operator with
deficiency indices both equal to  $1$ that is associated with the formal
differential expression \eqref{eq:differential-expression}, and the
boundary condition \eqref{eq:boundary-condition-at-0} whenever
$[-1/2,1/2)$. Assume that $\tilde{q}(x)$, given by
\eqref{eq:definition-q-tilde}, belongs to $L_p(0,1)$ for some
$p>2$. Then the operator $H$ is $n$-entire if and only if
$n>\frac{l}{2}+\frac34$.
\end{theorem}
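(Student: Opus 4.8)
The plan is to reduce the assertion to Theorem~\ref{thm:free-operator-is-n-entire} by feeding the set equality of Theorem~\ref{thm:spaces-equality} into the characterization of $n$-entire operators via $n$-associated functions provided by Theorem~\ref{thm:n-entire}. The first step is to identify the relevant de Branges spaces: by the functional model discussed around \eqref{eq:map}, the de Branges space related to $H$ in the sense of \cite[Section~2.3]{IV} is precisely $\cB=\cB(e)$ with $e(z)=\xi(z,1)+\I\xi'(z,1)$; applying the same construction with $q\equiv 0$ (using that $\xi_l(z,\cdot)\in\ker(H_l^*-zI)$) shows that the de Branges space related to the free operator $H_l$ is $\cB_0=\cB(e_0)$, $e_0(z)=\xi_l(z,1)+\I\xi_l'(z,1)$. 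Both $H$ and $H_l$ belong to $\ournewclass$, so Theorem~\ref{thm:n-entire} is applicable to each.

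Next I would observe that the space of $n$-associated functions $\assoc_n\cB(e)$ introduced in \eqref{eq:n-assoc-functions} depends on $\cB(e)$ only as a set of entire functions, being the linear span of $\{z^k f: f\in\cB(e),\ 0\le k\le n\}$. Since $\tilde q(x)\in L_p(0,1)$ with $p>2$ places us within the scope of Theorem~\ref{thm:spaces-equality}, that theorem gives $\cB=\cB_0$ as sets, and therefore $\assoc_n\cB=\assoc_n\cB_0$ as sets for every $n\in\Z^+$.

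Now invoke the equivalence of assertions~1 and~2 in Theorem~\ref{thm:n-entire}: $H$ is $n$-entire if and only if $\assoc_n\cB$ contains a zero-free entire function, and likewise $H_l$ is $n$-entire if and only if $\assoc_n\cB_0$ contains a zero-free entire function. Because $\assoc_n\cB=\assoc_n\cB_0$, these two conditions coincide, so $H$ is $n$-entire exactly when $H_l$ is. Theorem~\ref{thm:free-operator-is-n-entire} identifies the latter range as $n>\frac{l}{2}+\frac34$, which is the desired conclusion.

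The substance of the argument is entirely carried by Theorem~\ref{thm:spaces-equality}; what remains here is light, and the only point deserving care is to make certain that $\cB$ and $\cB_0$ genuinely are the de Branges spaces attached to $H$ and $H_l$ by the abstract functional model of \cite{IV}, so that the criterion $1\Leftrightarrow 2$ of Theorem~\ref{thm:n-entire} can be applied to both operators. This is precisely guaranteed by the observations that $\xi(z,\cdot)\in\ker(H^*-zI)$ and $\xi_l(z,\cdot)\in\ker(H_l^*-zI)$ together with the fact that \eqref{eq:map} (and its $q\equiv 0$ specialization) realizes that model.
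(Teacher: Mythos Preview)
Your argument is correct and follows essentially the same route as the paper's own proof: use Theorem~\ref{thm:spaces-equality} to identify $\cB$ with $\cB_0$ as sets, conclude that $\assoc_n\cB=\assoc_n\cB_0$, and then apply the equivalence $1\Leftrightarrow 2$ of Theorem~\ref{thm:n-entire} together with Theorem~\ref{thm:free-operator-is-n-entire}. You are simply more explicit than the paper about why $\cB$ and $\cB_0$ are indeed the de Branges spaces attached to $H$ and $H_l$ by the functional model, which is a welcome clarification rather than a different idea.
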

\begin{proof}
  Due to Theorem~\ref{thm:spaces-equality} the de Branges spaces
  associated with the operators $H_l$ and $H$ are the same (note that
  $\cB_0$ is the de Branges space associated with $H_l$). Then the
  corresponding spaces of $n$-associated functions also coincide. The
  assertion thus follows from
  Theorem~\ref{thm:free-operator-is-n-entire} and Theorem
  \ref{thm:n-entire}.
\end{proof}
Combining this theorem with Theorem~\ref{thm:n-entire}, it follows that
the spectra of the selfadjoint realizations $H_\beta$ have certain asymptotic
properties: 

\begin{corollary}
  \label{cor:main}
  Assume that $l\ge -\frac12$ and $\tilde{q}(x)$ lies in $L_p(0,1)$,
  with $p>2$. Then the spectra of two canonical selfadjoint extensions
  $H_{\beta_1}$, $H_{\beta_2}$ of $H$ satisfy the conditions (C1),
  (C2), (C3), of Theorem~\ref{thm:n-entire} with
  $n>\frac{l}{2}+\frac34$.
\end{corollary}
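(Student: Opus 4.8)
The plan is to simply combine Theorem~\ref{thm:main} with the equivalence of assertions 1 and 4 in Theorem~\ref{thm:n-entire}. Under the hypothesis that $l\ge-\tfrac12$ and $\tilde q(x)\in L_p(0,1)$ with $p>2$, Theorem~\ref{thm:main} tells us that the symmetric operator $H$ associated with the differential expression \eqref{eq:differential-expression} (together with the boundary condition \eqref{eq:boundary-condition-at-0} when $l\in[-1/2,1/2)$) is $n$-entire for every integer $n>\tfrac{l}{2}+\tfrac34$. In particular, fix the smallest such integer $n$.

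Next I would invoke Theorem~\ref{thm:n-entire}. Since $H\in\ournewclass$ and $H$ is $n$-entire (assertion 1), assertion 4 holds: for any two canonical selfadjoint extensions $H_{\beta_1}$ and $H_{\beta_2}$ of $H$ with $\beta_1\ne\beta_2$, arranging $\spec(H_{\beta_1})=\{x_j\}_{j\in\N}$ by increasing modulus (and splitting into positive and non-positive parts as in the statement of Theorem~\ref{thm:n-entire}), the three conditions (C1), (C2), (C3) are satisfied, with the function $h_\beta$ appearing in (C3) defined from $\spec(H_\beta)$ exactly as in Theorem~\ref{thm:n-entire}. This is precisely the assertion of the corollary. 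Since this works for the smallest admissible $n$, and (C3) with a smaller exponent $2n$ in the denominator is a weaker condition, the statement holds for every $n>\tfrac{l}{2}+\tfrac34$ as well; this monotonicity in $n$ is worth a one-line remark but requires no real argument.

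There is essentially no obstacle here: the corollary is a direct corollary in the literal sense, and all the analytic work has already been done in Theorem~\ref{thm:free-operator-is-n-entire} (the free case), Theorem~\ref{thm:spaces-equality} (the de Branges space is unchanged by the perturbation), and Theorem~\ref{thm:main} (which transports $n$-entirety from $H_l$ to $H$ via the coincidence of the associated spaces of $n$-associated functions). The only point that deserves a sentence of care is the bookkeeping: one must note that the extensions $H_\beta$ of $H$ are exactly the canonical selfadjoint extensions referred to in assertion 4 of Theorem~\ref{thm:n-entire}, which is immediate from the functional-model construction recalled after \eqref{eq:map} (the transform \eqref{eq:map} realizes the abstract model of \cite{IV}, under which canonical extensions correspond to the boundary-condition parameter $\beta$). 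With that identification in place, the proof is two lines: apply Theorem~\ref{thm:main} to get $n$-entirety, then apply the implication $1\Rightarrow 4$ of Theorem~\ref{thm:n-entire}.
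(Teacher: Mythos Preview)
Your proposal is correct and matches the paper's own argument, which is literally the single sentence preceding the corollary: combine Theorem~\ref{thm:main} with the implication $1\Rightarrow 4$ of Theorem~\ref{thm:n-entire}. One small note: your monotonicity remark is unnecessary (Theorem~\ref{thm:main} already gives $n$-entirety for \emph{every} integer $n>\tfrac{l}{2}+\tfrac34$, so you can apply $1\Rightarrow 4$ directly for each such $n$), and as written the phrase ``a smaller exponent $2n$ in the denominator is a weaker condition'' is backwards---a \emph{larger} exponent makes (C3) weaker---but this does not affect the validity of the argument.
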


\subsection*{Acknowledgments}
J.H.T. gratefully acknowledges the kind hospitality of IIMAS--UNAM
during a visit in 2013 where most of this article was written.
The authors thank the referees for their valuable suggestions and hints to the 
literature.

\end{document}